\title{Database Theory in Action: From Inexpressibility to Efficiency in GQL's Order-Constrained Paths}
\titlerunning{Inexpressibility to Efficiency in GQL's Order-Constrained Paths}
\author{Hadar Rotschield}{School of Computer Science and Engineering, The Hebrew University of Jerusalem, Israel}{hadar.rotschield@mail.huji.ac.il}{https://orcid.org/0009-0000-2230-1734}{}
\author{Liat Peterfreund}{School of Computer Science and Engineering, The Hebrew University of Jerusalem, Israel}{liat.peterfreund@mail.huji.ac.il}{https://orcid.org/0000-0002-4788-0944}{}
\authorrunning{H. Rotschield and L. Peterfreund}
\keywords{Property graphs, ISO GQL, Graph Query Languages, Pattern Matching}
\algrenewcommand\algorithmicrequire{\textbf{Input:}}
\algrenewcommand\algorithmicensure{\textbf{Output:}}
\newenvironment{reptheorem}[1]{%
  \begingroup
  \begin{theorem}%
}{%
  \end{theorem}%
  \endgroup
}
\newcommand{\src}{\mathsf{src}}
\newcommand{\tgt}{\mathsf{tgt}}
\newcommand{\OMIT}[1]{}
\newcommand{\df}{:=}
\newcommand{\query}{\mathcal Q}
\newenvironment{repproposition}[1]{%
  \begingroup
  \begin{proposition}%
}{%
  \end{proposition}%
  \endgroup
}
\begin{document}
\maketitle

\begin{abstract}
Pattern matching of core GQL, the new ISO standard for querying property graphs, cannot check whether edge values are increasing along a path, as established in recent work.
We present a constructive translation that overcomes this limitation by compiling the increasing-edges condition into the input graph.
Remarkably, the benefit of this construction goes beyond restoring expressiveness. In our proof-of-concept implementation in Neo4j's Cypher, where such path constraints are expressible but costly, our compiled version runs faster and avoids timeouts. 
This illustrates how a theoretically motivated translation can not only close an expressiveness gap but also bring practical performance gains.
\end{abstract}
\section{Introduction}

GQL is the new ISO standard for querying property graphs~\cite{GQLStandards}. Property graphs are a rich and flexible data model in which nodes and edges can carry labels (often representing their types) and properties (in a key-value format), which makes them suitable for various domains such as finance, social networks, and knowledge graphs~{\cite{srivastava2023fraud, ArenasGutierrezSequeda2021, ZhuNieLiuZhangWen2009}.}

The GQL standard is lengthy and complex, first formalized in~\cite{pods23,icdt23}, later distilled into its core~\cite{vldb25}, enabling precise expressiveness analysis, {and subsequently positioned on a theory/practice spectrum as articulated in~\cite{libkin-martens-murlak-peterfreund-vrgoc-pods25}.}
{GQL comprises a pattern-matching layer,
used in MATCH to produce bindings of nodes or edges, and
a relational layer that post-processes those bindings using relational algebra.}
One of the key findings in~\cite{vldb25} is that queries identifying paths with increasing edge values are not expressible in core GQL. Such queries are common in practice, leading participants in the ISO GQL design to consider possible language extensions to support them~\cite{fred,tobias}. For example, in a financial transaction graph, accounts are nodes with properties such as balance, and transfers are edges with properties such as amount and timestamp. While pattern matching allows to query for chains of transfers where account balances increase along the path, expressing that timestamps increase along a path is impossible. In other words, increasing value conditions can be checked on nodes, but not on edges.

In this paper, we present a constructive translation that overcomes the expressiveness gap by compiling an input graph and order condition into a leveled graph which, together with a  reachability query,  captures the intended semantics. Our contribution is twofold: first, we provide a principled method that makes inexpressible queries formally definable; second, we show that the same approach yields practical benefits. While such queries can be expressed in the full versions of GQL and in Cypher, the formulations are highly convoluted and require reasoning over exponentially many paths, leading to impractical runtimes as demonstrated in the experiments in~\cite{vldb25}. In contrast, our technique achieves efficient execution in practice, as confirmed by a proof-of-concept implementation. This demonstrates how a theoretically motivated translation can both extend the expressive power of a standard language and provide concrete performance improvements.

\section{Compiling Ordered-Path Constraints into a Leveled Graph}
In this section we show how to compile the increasing values along edges condition into the input graph.
We start by defining the data model. We assume pairwise disjoint sets \textit{Nodes}, \textit{Edges}, \textit{Labels}, \textit{Properties}, and \textit{Values}, and define 
a \emph{labeled property graph} (see, e.g.,~\cite{libkin-martens-murlak-peterfreund-vrgoc-pods25}) as a tuple
\(
G=(N,E,\mathrm{src},\mathrm{tgt},\lambda,\rho)
\)
where \(N\subseteq\textit{Nodes}\) and \(E\subseteq\textit{Edges}\) are finite,
\(\mathrm{src},\mathrm{tgt}:E\to N\) are total functions,
\(\lambda:E\to\textit{Labels}\) assigns an edge label to each edge, and
\(\rho:(N\cup E)\times\textit{Properties}\rightharpoonup \textit{Values}\) is a partial function that stores node/edge properties.

\smallskip
\textbf{Problem Definition.}
To formulate the problem, we assume each edge is associated with a designated property $\textsf{val}$ whose value is in $\mathbb R$. Formally, we assume \(\rho(e,\textsf{val})\) is defined for  every \(e \in E\) and that  \( \rho(e,\textsf{val}) \in \mathbb{R}\).
For readability,  we denote $ u \xrightarrow{j} v$ if there is a directed edge $e$ from $u$ to $v$ with  \(j\df \rho(e,\textsf{val})\).
It was shown in~\cite{vldb25} that the following problem is not expressible in pattern matching of core GQL:

\begin{center}
\fbox{
\begin{minipage}{0.96\linewidth}
\label{def:srtictly-inc-path}
\textbf{Strictly Increasing-Path Existence.}
For every labeled property graph $G$, and nodes $s,t\in N$, does there exist a non-empty directed path
\(
s=u_0 \xrightarrow{j_1} u_1 \xrightarrow{j_2} \cdots \xrightarrow{j_m} u_m=t \quad (m\ge 1)
\)
whose edge-values are strictly increasing  (i.e.,
\(
j_1 \;<\; j_2 \;<\; \cdots \;<\; j_m
\))?
\end{minipage}}
\end{center}

The underlying reason that this query is not expressible is that GQL’s concatenation, by design, verifies only node equivalence.

\subsection*{Our Construction}
We show that given a graph, we can compile the increasing value condition into it. 
The idea is to encode, in the node identifier of the compiled graph, the value of an incoming edge, hence reducing  the increasing value conditions to  ordinary reachability.

\begin{definition}[Leveled graph]\label{def:leveled}
Let $G$ be a labeled property graph.
For each node $v$ of $G$, let
$
L(v) = \{\, j \mid u \xrightarrow{j} v \text{ for some } u \,\} \cup \{\bot\},
$
that is, $L(v)$ is the set of values of the incoming edges to $v$, together with a special value $\bot$ used for paths that start at $v$. We assume $\bot < j$ for all values $j$.
The leveled graph $G^{\mathsf{lev}}$ of $G$ consists of nodes $(v,j)$ for every edge $u \xrightarrow{j} v$, and, for every $v\in N$, the node $(v,\bot)$, and edges
$
(u,\ell) \to (v,j)
$ for every  edge $u \xrightarrow{j} v$, and $\ell \in L(u)$ with $\ell < j$.
\end{definition}

\begin{figure}[t]
	\centering
	\makebox[\linewidth][c]{
		\begin{subfigure}{.3\linewidth}
			\centering
			\includegraphics[width=\linewidth]{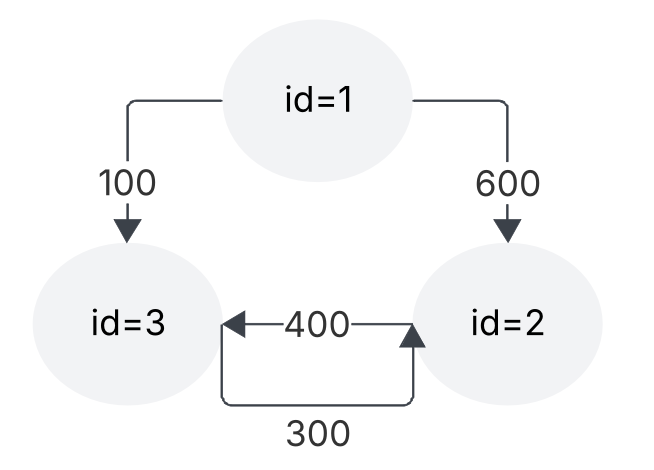}
			\caption{\label{fig:trans} Input property graph $G$.}
			\label{fig:ex-base}
		\end{subfigure}\hspace{1.2em}
		\begin{subfigure}{.43\linewidth}
			\centering
			\includegraphics[width=\linewidth]{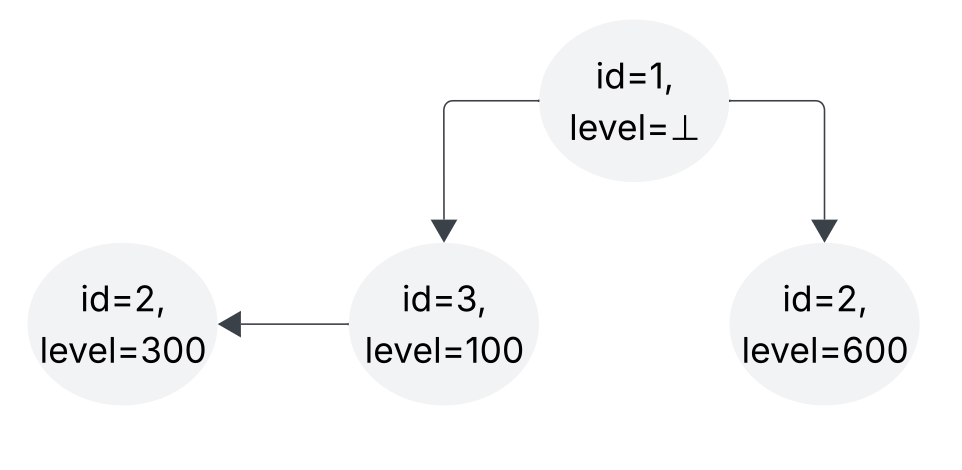}
			\caption{Leveled graph $G^{\mathsf{lev}}$.}
			\label{fig:ex-leveled}
		\end{subfigure}
	}
	\caption{\label{fig:ex-lev}Leveled-graph example (with amounts). In (a) the increasing path
		$1\!\to\!3\!\to\!2$ exists ($100<300$), while $1\!\to\!2$ with $600$ cannot be
		extended via $2\!\to\!3$ with $400$. In (b) nodes are copied by their last-seen
		from node $1$ to some node $2$ in $G^{\mathsf{lev}}$.}
	\label{fig:ex-leveled-example}
\end{figure}

To illustrate the definition, consider the following example. 

\begin{example}[Increasing amounts]
\label{ex:increasing}
The graph in Figure~\ref{fig:trans}  describes transfers between accounts where $j$ denotes the amount of the transfer for each edge $u \xrightarrow{j} v$.
In 
$G^{\mathsf{lev}}$ in Figure~\ref{fig:ex-leveled}, each account $v$ is replicated as $(v,\ell)$ for every $\ell\in L(v)$.
There is an edge $(u,\ell)\to(v,j)$ iff there is a transfer $u \xrightarrow{j} v$ in $G$ and { there exists  $\ell \in L(u)$ for which $\ell<j$}.
Thus, an increasing-transfer path from $s$ to $t$ in $G$ corresponds to a reachability problem in $G^{\mathsf{lev}}$. 
\end{example} 

With this intuition, we can move to showing the correctness of our construction.

\newcommand{\thmLevelLiftcorrect}{
Fix a property graph $G$ and nodes $s,t\in N$. There exists a non-empty strictly increasing path
$s \rightsquigarrow t$ in $G$ iff there exists $\ell\in L(t)$ such that
$(s,\perp)\rightsquigarrow (t,\ell)$ in $G^{\mathsf{lev}}$.
}

\begin{theorem}
\label{thm:correct}
  \thmLevelLiftcorrect
\end{theorem}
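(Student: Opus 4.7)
The plan is to prove both directions by direct construction of one path from the other, exploiting the design principle that the second coordinate of a node $(v,j)$ in $G^{\mathsf{lev}}$ records the value of the incoming edge that produced it (with $\bot$ representing a path that has not yet traversed an edge). Because of this bookkeeping, a path in $G^{\mathsf{lev}}$ uniquely determines its projection into $G$, and conversely a strictly increasing path in $G$ uniquely determines its lift. A short induction on the path length $m$ will work, but in fact the construction is explicit enough that the inductive step is immediate once the base case is handled.

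For the forward direction, given a strictly increasing path $s=u_0\xrightarrow{j_1}u_1\xrightarrow{j_2}\cdots\xrightarrow{j_m}u_m=t$ in $G$, I would exhibit the lift
\[
(s,\bot)=(u_0,\bot)\to(u_1,j_1)\to(u_2,j_2)\to\cdots\to(u_m,j_m)=(t,j_m)
\]
in $G^{\mathsf{lev}}$ and verify each arc against Definition~\ref{def:leveled}. For the first arc, $\bot\in L(u_0)$ by construction and $\bot<j_1$ by convention; for subsequent arcs, $j_{i-1}\in L(u_{i-1})$ because the edge $u_{i-2}\xrightarrow{j_{i-1}}u_{i-1}$ of the original path witnesses it, and $j_{i-1}<j_i$ is the strict-increase hypothesis. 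The witness is then $\ell\df j_m\in L(t)$.

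For the backward direction, given a path $(s,\bot)=(v_0,\ell_0)\to(v_1,\ell_1)\to\cdots\to(v_m,\ell_m)=(t,\ell)$ in $G^{\mathsf{lev}}$ with $m\ge 1$, I would project it to $v_0\xrightarrow{\ell_1}v_1\xrightarrow{\ell_2}\cdots\xrightarrow{\ell_m}v_m$ in $G$. Each arc $(v_{i-1},\ell_{i-1})\to(v_i,\ell_i)$ of $G^{\mathsf{lev}}$ exists only when there is a corresponding $G$-edge $v_{i-1}\xrightarrow{\ell_i}v_i$ and when $\ell_{i-1}<\ell_i$; chaining these inequalities for $i=2,\dots,m$ gives $\ell_1<\ell_2<\cdots<\ell_m$, so the projected path is strictly increasing, while the constraint $\ell_0=\bot<\ell_1$ on the first arc is vacuous.

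The only real subtlety, and the point I would take most care with, is maintaining a consistent reading of the second coordinate on both sides of the bijection: it always denotes the value of the edge just consumed, never the edge about to be taken. Once this is fixed, the two directions become symmetric rewritings of the same sequence of edges, and no combinatorial difficulty remains; the argument should fit in a few lines and does not require any nontrivial case analysis beyond treating the initial $\bot$ level as a universal lower bound.
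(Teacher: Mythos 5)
Your proposal is correct and follows essentially the same route as the paper's own proof: the forward direction lifts the increasing path to $(u_0,\bot)\to(u_1,j_1)\to\cdots\to(u_m,j_m)$ using the observation that $j_{i-1}\in L(u_{i-1})$ is witnessed by the preceding edge of the path itself, and the backward direction projects a leveled path back to $G$ using the fact that the target's second coordinate equals the traversed edge's value, chaining the side conditions $\ell_{i-1}<\ell_i$. The paper phrases the forward direction as an explicit induction on $i$, but the content is identical to your arc-by-arc verification.
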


\begin{proof}[Proof Sketch]
($\Rightarrow$) Let $s\leadsto t$ in $G$ be $u_0\!\xrightarrow{j_1}\!u_1\!\xrightarrow{j_2}\!\cdots\!\xrightarrow{j_m}\!u_m$ with $j_1<\cdots<j_m$.
As $\bot<j_1$
and $j_{i-1}<j_i$, Definition~\ref{def:leveled} creates the path
$(u_0,\bot)\to(u_1,j_1)\to\cdots\to(u_m,j_m)$ in $G^{\mathsf{lev}}$.
 

($\Leftarrow$) Any path $(s,\bot)\to(u_1,j_1)\to\cdots\to(t,\ell)$ in $G^{\mathsf{lev}}$
projects to $s\!\xrightarrow{j_1}\!\cdots\!\xrightarrow{j_m}\!t$ in $G$.
Definition~\ref{def:leveled} ensures the edge conditions
enforce $j_{i-1}<j_i$.

\end{proof}
This immediately yields the desired reduction: checking for a strictly increasing path in $G$ reduces to a reachability query in $G^{\mathsf{lev}}$. 

\smallskip
\textbf{Time Complexity.}
Using the bounds below, the construction of the leveled graph can be done
in time polynomial
in its input size.
\newcommand{\propTimeComplexity}{
One can construct \(G^{\mathsf{lev}}\) from \(G\) by: (i) computing $L(v)$ for each node $v\in N$
	(ii) creating a node \((v,\ell)\) for each \(v\in N\) and \(\ell\in L(v)\),
	and (iii) for each edge \(u\xrightarrow{j} v\) in \(G\) and each \(\ell\in L(u)\) with \(\ell<j\),
	adding the edge \((u,\ell)\to(v,j)\).
    This can be done in 
$O(|N|+|E|^2)$ time.
}

\begin{proposition}
\label{prop:time-complexity}
  \propTimeComplexity
\end{proposition}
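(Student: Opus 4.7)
The plan is to analyze each of the three construction steps separately and show that their costs sum to $O(|N|+|E|^2)$. Correctness is immediate from Definition~\ref{def:leveled}, since the procedure creates exactly the stated nodes and edges, so the argument reduces to a running-time accounting. The key implementation choice is to represent each $L(v)$ as a hash set so that membership queries and insertions are amortized $O(1)$; otherwise a logarithmic factor would creep into the analysis.

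For step (i), I would perform a single pass over $E$: for each edge $u \xrightarrow{j} v$, insert $j$ into $L(v)$, and then insert $\bot$ into $L(v)$ for every $v \in N$. This runs in $O(|N|+|E|)$ time and satisfies $|L(v)| \le \mathrm{indeg}(v)+1$, so $\sum_v |L(v)| \le |N|+|E|$. Step (ii) then allocates one node $(v,\ell)$ per element of these sets, which is likewise $O(|N|+|E|)$. For step (iii), I would iterate over each edge $u \xrightarrow{j} v$ and every $\ell \in L(u)$, emitting $(u,\ell)\to(v,j)$ whenever $\ell<j$; the total work is $\sum_{u\xrightarrow{j}v} |L(u)| = \sum_{u} \mathrm{outdeg}(u)\cdot|L(u)| \le |E|\cdot(|E|+1)$, which is $O(|E|^2)$ and dominates the overall complexity.

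The main obstacle is essentially bookkeeping in step (iii): one must iterate directly over the stored set $L(u)$ (rather than re-scanning an ambient universe) and apply the coarse bound $|L(u)|\le |E|+1$ when collapsing the double sum. A sharper bound of $O(|N|+\sum_u \mathrm{outdeg}(u)\cdot\mathrm{indeg}(u))$ is attainable by the same accounting, but the claim only requires $O(|N|+|E|^2)$, so the coarse estimate suffices.
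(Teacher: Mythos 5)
Your proposal is correct and matches the paper's argument essentially step for step: the paper likewise computes all $L(v)$ in one pass over $E$, bounds $|N'|\le |N|+|E|$ via indegrees, and bounds $|E'|\le \sum_{u\xrightarrow{j}v}|L(u)|\le |E|\cdot\max_u|L(u)|=O(|E|^2)$, concluding $O(|E|+|N'|+|E'|)=O(|N|+|E|^2)$. The only additions on your side are harmless implementation details (hash sets, the slightly sharper $\sum_u \mathrm{outdeg}(u)\cdot\mathrm{indeg}(u)$ bound) that do not change the argument.
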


\begin{proof}[Proof Sketch]
Recall that $L(v)=\{\,j: \exists u,\,u\xrightarrow{j}v\in E\,\}\cup\{\bot\}$ hence we can compute $L(v)$ for all $v\in N$ in $O(|E|)$. 
{For $N'$ and $E'$ the nodes and edges of \(G^{\mathsf{lev}}\), respectively, we have}

\[
\begin{aligned}
|N'|
&=\sum_{v\in N}\! |L(v)|
\;{\le}\; \sum_{v\in N}\! (1+\mathrm{indegree}(v))
\;{=}\; |N| + |E|.
\\
|E'|
&= \sum_{(u,j,v)\in E} \bigl|\{\;\ell\in L(u)\mid \ell<j\;\}\bigr|
\;\le\; \sum_{(u,j,v)\in E} |L(u)|
\;\le\; |E| \cdot \max_{u\in N} |L(u)|
\;\le\; |E|^2.
\end{aligned}
\]
{\noindent Hence, to compute $G^{\mathsf{lev}}$ we need $O(|E|+|N'|+|E'|)=O(|N|+|E|^2)$ time.}
\end{proof}

The increasing-path condition is expressible in the full language with
set operations (in particular, difference)~\cite{vldb25}.
Here we deliberately work
at the pattern-matching level because composing
the required set operations leads to blow-ups, whereas compiling the order constraint into the leveled graph lets
us answer it via plain reachability.

\section{Experiments}
\label{sec:experiments}

\newlength{\panelheight}
\setlength{\panelheight}{0.35\textheight} 

\begin{figure}[t]
\centering
\begin{minipage}[t][\panelheight][b]{0.45\linewidth} 
  \centering
  \resizebox{\linewidth}{!}{
  \begin{tabular}{@{}rcccc@{}}
  \toprule
  \shortstack{$|E|$} &
  \shortstack{Leveled\\build (ms)} &
  \shortstack{Leveled\\query (ms)} &
  \shortstack{Baseline\\query (ms)} &
  \shortstack{Speedup\\($\times$)} \\
  \midrule
   20  &  57 & 10 &    9 & 0.13 \\
   40  &  17 &  6 &    6 & 0.26 \\
   60  &  12 &  6 &    7 & 0.39 \\
   80  &  76 &  6 &    8 & 0.10 \\
  100  &  16 &  8 &   12 & 0.50 \\
  120  &  20 &  7 &   10 & 0.37 \\
  140  &  15 &  7 &  363 & \textbf{16.50} \\
  160  &  15 &  7 & 1191 & \textbf{54.14} \\
  180  &  17 &  7 & \textsc{timeout} & --- \\
  200  &  23 &  7 & \textsc{timeout} & --- \\
  220  &  29 & 11 & \textsc{timeout} & --- \\
  240  &  21 &  7 & \textsc{timeout} & --- \\
  260  &  80 &  7 & \textsc{timeout} & --- \\
  280  &  28 &  8 & \textsc{timeout} & --- \\
  300  &  16 &  8 & \textsc{timeout} & --- \\
  \bottomrule
\end{tabular}

  }
  \captionof{table}{Baseline vs.\ leveled at $|N|{=}100$.
  Speedup is (baseline avg query)/({leveled build}+{leveled avg query}).}
  \label{tab:results-leveled}
\end{minipage}\hfill
\begin{minipage}[t][\panelheight][b]
{0.5\linewidth} 
  \centering
  \includegraphics[width=\linewidth]{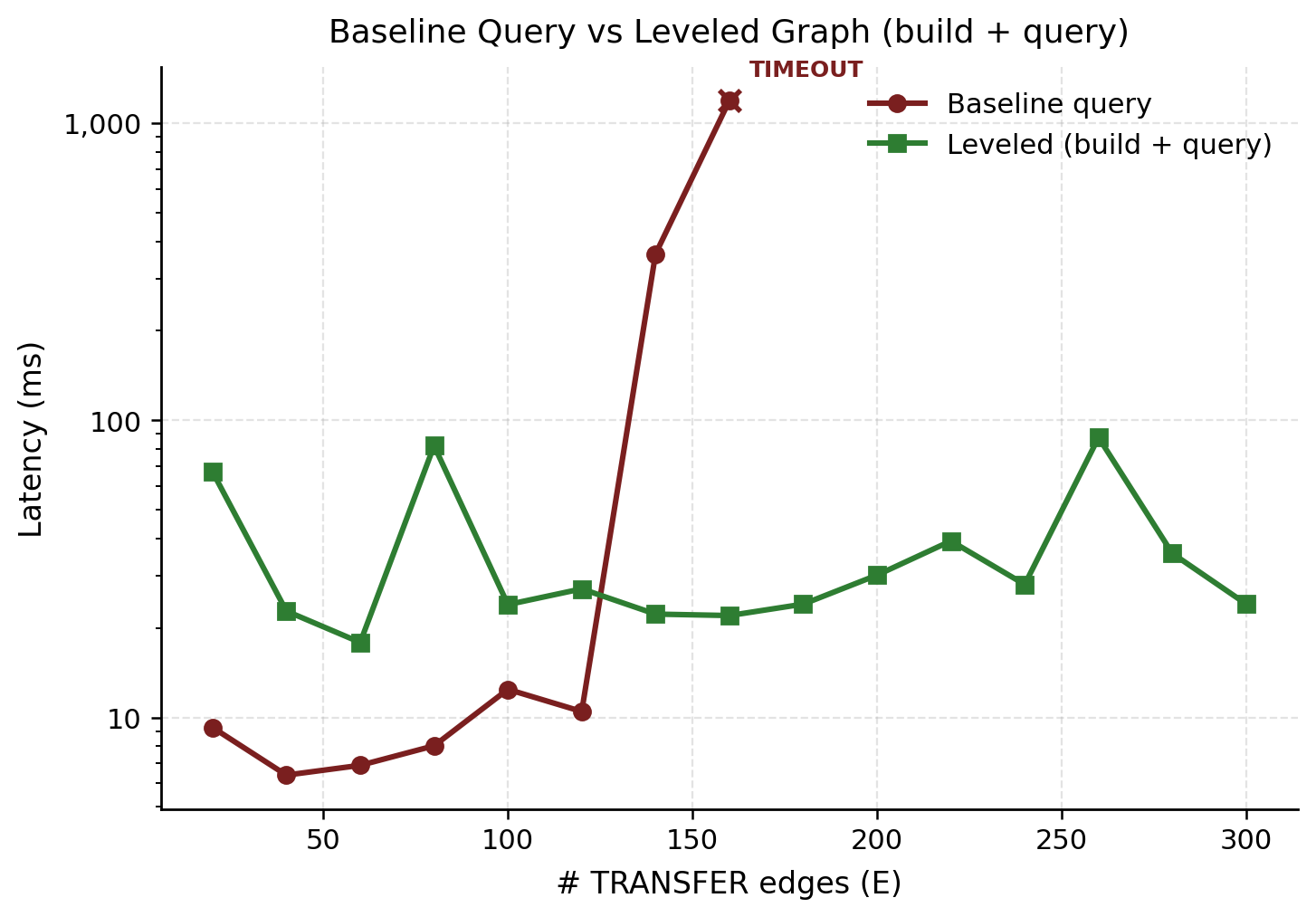}
  \caption{Per-run average latency (log-$y$). Baseline rises and times out for $|E|\!\ge\!180$ (10s limit), leveled query and build remains $\le\!100$\,ms across these sizes.}
\label{fig:leveled-lat}
\end{minipage}
\end{figure}

We evaluate the leveled graph compilation for strictly increasing edge values by comparing:
(i) a {baseline} query on the original graph that enforces $j_{i-1}<j_i$, and
(ii) a {leveled} graph build along with a  reachability query on the compiled leveled graph $G^{\mathrm{lev}}$.

\smallskip
\textbf{Setup.}
We use a local Neo4j instance accessed via the {official Neo4j \texttt{python} driver}.\footnote{\scriptsize Experiments were run locally on macOS {15.6} (build {24G84}) on a MacBook Air ({Mac16,13}) with an {Apple M4} CPU ({10} cores: 4 performance + 6 efficiency) and {16\,GB} RAM.
The database was {Neo4j 2025.07.1} (local install, default configuration), via the official {Neo4j Python driver}.
Client stack: {Python 3.13.5} and {neo4j driver 5.28.2}.} The script rebuilds the base graph for each edge count, times the base build, compiles and times $G^{\mathrm{lev}}$, and then runs repeated baseline/leveled queries with client-side per-run timeouts enforced by a worker process. The workload fixes $|N|{=}100$ accounts and varies $|E|$ from $20$ to $300$ in {jumps of 20}, each edge has an integer \texttt{amount} in $[1,1000]$. We run 10 times per $|E|$ with timeout=10 sec. Averages are over successful runs only.

\smallskip
\textbf{Results.}
Leveled reachability is nearly flat across densities ($\sim$6–11\,ms per run), whereas the baseline grows sharply and then times out. The leveled approach is slower for small graphs ($|E|\le 120$), but becomes decisively faster as density grows: at $|E|{=}140$ we observe a \textbf{16.50}$\times$ speedup, rising to \textbf{54.14}$\times$ at $|E|{=}160$. For $|E|\ge 180$ all baseline runs hit the 10s per-run timeout, while leveled runs complete without timeouts. Note that results may vary with random graph instances, local resource contention and driver overhead. 
\section{Conclusions}
We revisited the expressiveness gap in core GQL and showed that compiling the order constraint into the graph 
reduces the inexpressible query 
to a standard reachability one while pointing to potential runtime benefits in our prototype implementation. 
Looking ahead, we propose to extend our method 
to address a broader class of queries (such as those including alternating order constraints) and to conduct a more systematic experimental study across different implementations, languages, datasets, and scales.

\newpage
\bibliographystyle{plainurl}
\bibliography{refs}

\newpage
\appendix 
\section*{Appendix}

\begin{reptheorem}{thm:correct}
  \thmLevelLiftcorrect
\end{reptheorem}

\begin{proof}\label{proof:theorem-correct}
We recall the construction of $G^{\mathsf{lev}}$ from Definition~\ref{def:leveled}.
For each $v\in N$ let
\[
L(v)\ :=\ \{\, j\in D \mid \exists u\in N:\ u \xrightarrow{j} v \in E \,\}\ \cup\ \{\bot\}.
\]
The leveled graph is \(G^{\mathsf{lev}}=(N',E',\src',\tgt',\lambda',\rho')\) with
\(
N' \ =\ \{\, (v,\ell)\mid v\in N,\ \ell\in L(v)\,\},
\)
and for every \(e\in E\) with \(j:=\rho(e,\textsf{val})\) and every \(\ell\in L(\src(e))\) such that \(\ell<j\),
we create a fresh \(e_\ell\in E'\) and set
\(
\src'(e_\ell)=(\src(e),\ell),\qquad \tgt'(e_\ell)=(\tgt(e),j),
\)
Increasing paths in $G$ is a (non-empty) path in $G$ is a sequence
\[
\pi \ :=\ u_0 \xrightarrow{j_1} u_1 \xrightarrow{j_2} \cdots \xrightarrow{j_m} u_m
\quad (m\ge 1),
\]
and it is {strictly increasing} if $j_1<j_2<\cdots<j_m$.

When we say projection from $G^{\mathsf{lev}}$ to $G$, we mean that each Leveled edge $((u,\ell),(v,j))\in E'$ is by construction associated with the unique
base edge $u\xrightarrow{j} v\in E$ and satisfies the side condition $\ell<j$.

Given a (non-empty) Leveled path
\[
\hat\pi \ :=\ (u_0,\ell_0)\to(u_1,\ell_1)\to\cdots\to(u_m,\ell_m)
\]
in $G^{\mathsf{lev}}$, its {projection} to $G$ is the sequence
\[
\pi \ :=\ u_0 \xrightarrow{j_1} u_1 \xrightarrow{j_2} \cdots \xrightarrow{j_m} u_m,
\]
where each Leveled edge $((u_{i-1},\ell_{i-1}),(u_i,\ell_i))$ corresponds to the base
edge $u_{i-1}\xrightarrow{j_i}u_i$.

\medskip
\noindent(\(\Rightarrow\)) Suppose there exists a strictly increasing path
$\pi = u_0 \xrightarrow{j_1} u_1 \xrightarrow{j_2} \cdots \xrightarrow{j_m} u_m$
in $G$ from $s=u_0$ to $t=u_m$ with $j_1<\cdots<j_m$.
Define $\ell_0:=\bot$ and, for $i\ge 1$, set $\ell_i:=j_i$.
We show by induction on $i=1,\ldots,m$ that $((u_{i-1},\ell_{i-1}),(u_i,\ell_i))\in E'$.

{Base $i=1$.} Since $\bot<j_1=\ell_1$ and $u_0\xrightarrow{j_1}u_1\in E$, the edge
$((u_0,\bot),(u_1,j_1))$ is in $E'$.

{Step $i\to i{+}1$.} We have $u_i\xrightarrow{j_{i+1}}u_{i+1}\in E$ and
$\ell_i=j_i<j_{i+1}=\ell_{i+1}$.
Because $j_i$ is the value on an edge incoming to $u_i$, we have $\ell_i\in L(u_i)$,
so $((u_i,\ell_i),(u_{i+1},\ell_{i+1}))\in E'$.

Thus $\hat\pi:(s,\bot)\to(u_1,j_1)\to\cdots\to(t,j_m)$ is a Leveled path in $G^{\mathsf{lev}}$.
Since $j_m\in L(t)$, its endpoint has the form $(t,\ell)$ with $\ell\in L(t)$.

\medskip
\noindent(\(\Leftarrow\)) Suppose there exists a Leveled path
\[
\hat\pi \ :=\ (s,\bot)=(u_0,\ell_0)\to(u_1,\ell_1)\to\cdots\to(u_m,\ell_m)=(t,\ell)
\]
in $G^{\mathsf{lev}}$.
By construction of $E'$, each Leveled edge $((u_{i-1},\ell_{i-1}),(u_i,\ell_i))$
comes from a base edge $u_{i-1}\xrightarrow{j_i}u_i\in E$ with the side condition
$\ell_{i-1}<j_i$ and {target level} $\ell_i=j_i$.
Hence the projection
\[
\pi \ :=\ u_0 \xrightarrow{j_1} u_1 \xrightarrow{j_2} \cdots \xrightarrow{j_m} u_m
\]
is a path in $G$ from $s$ to $t$, and it is strictly increasing since
$\ell_{i-1}<j_i$ and $\ell_{i-1}=j_{i-1}$ imply $j_{i-1}<j_i$ for all $i$.

\medskip
Combining the two directions yields the equivalence stated in Theorem~\ref{thm:correct}.
\end{proof}

\begin{repproposition}{prop:time-complexity}
  \propTimeComplexity
\end{repproposition}

\begin{proof}\label{proof:prop-time-complexity}

The following Algorithm~\ref{alg:LevelLift} completes the proof.
\end{proof}

\begin{algorithm}[b!]
\fontsize{11}{13}\selectfont 
\caption{Constructing the leveled graph $G^{\mathsf{lev}}$.}
\label{alg:LevelLift}
\begin{algorithmic}[1]
\Require Graph $G$, each edge is written $u \xrightarrow{j} v$ with $j$ in a totally ordered domain $(D,<)$, fresh $\bot$ with $\bot<d$ for all $d\in D$.
\Ensure $G^{\mathsf{lev}}$ as in Def.~\ref{def:leveled}.
\Statex

\State \textbf{Levels}
\ForAll{$v\in N$}
  \State $L(v)\gets \{\bot\}$
\EndFor
\ForAll{$u \xrightarrow{j} v \in E$}
  \State $L(v)\gets L(v)\cup\{j\}$
\EndFor
\Statex

\State \textbf{Nodes}
\State $N'\gets \emptyset$
\ForAll{$v\in N$}
  \ForAll{$\ell\in L(v)$}
    \State $N'\gets N'\cup\{(v,\ell)\}$
  \EndFor
\EndFor
\Statex

\State \textbf{Edges}
\State $E'\gets \emptyset$
\ForAll{$u \xrightarrow{j} v \in E$}
  \ForAll{$\ell\in L(u)$ \textbf{with} $\ell<j$}
    \State $E'\gets E'\cup\{(u,\ell)\to(v,j)\}$
  \EndFor
\EndFor

\State \Return $G^{\mathsf{lev}}$
\end{algorithmic}
\end{algorithm}

\end{document}